\documentclass[conference]{IEEEtran}
\IEEEoverridecommandlockouts
\usepackage{cite}
\usepackage{amsmath,amssymb,amsfonts}
\usepackage{algorithmic}
\usepackage{graphicx}
\usepackage{textcomp}
\usepackage{xcolor}
\usepackage[hyphens]{url}
\usepackage[hidelinks]{hyperref} 
\Urlmuskip=0mu plus 1mu %
\usepackage{mathtools} %
\usepackage{amsthm}
\usepackage{algorithm}
\usepackage{multirow}
\usepackage{makecell}
\usepackage[table]{xcolor}

\newtheorem{proposition}{Proposition}

\DeclareMathOperator*{\argmax}{arg\,max}

\DeclareMathOperator{\ext}{ext}

\allowdisplaybreaks

\def\BibTeX{{\rm B\kern-.05em{\sc i\kern-.025em b}\kern-.08em
    T\kern-.1667em\lower.7ex\hbox{E}\kern-.125emX}}

\begin{document}

\title{DER Day-Ahead Offering: A Neural Network Column-and-Constraint Generation Approach}

\author{
\IEEEauthorblockN{Weiqi Meng, Hongyi Li, Bai Cui}
\IEEEauthorblockA{
Department of Electrical and Computer Engineering\\
Iowa State University, Ames, IA, USA\\
Email: \texttt{mengwq@iastate.edu, hongyili@iastate.edu, baicui@iastate.edu}
}
\thanks{The work was supported in part by the U.S. Department of Energy Solar Energy Technologies Office under Grant DE-EE0011374; and in part by the National Science Foundation under Grant 2523935.}
}


\maketitle

\IEEEpubid{
\begin{minipage}{\textwidth}
\centering \footnotesize \vspace{1in}
\copyright~2026 IEEE. Personal use of this material is permitted. Permission from IEEE must be obtained for all other uses, including reprinting/republishing this material for advertising or promotional purposes, creating new collective works for resale or redistribution to servers or lists, or reuse of any copyrighted components of this work in other works.
\end{minipage}
}

\begin{abstract}
In the day-ahead energy market, the offering strategy of distributed energy resource (DER) aggregators must be submitted before the uncertainty realization in the form of price-quantity pairs. This work addresses the day-ahead offering problem through a two-stage adaptive robust stochastic optimization model, wherein the first-stage price-quantity pairs and second-stage operational commitment decisions are made before and after DER uncertainty is realized, respectively. Uncertainty in day-ahead price is addressed using a stochastic programming-based approach, while uncertainty of DER generation is handled through robust optimization. To address the max-min structure of the second-stage problem, a neural network-accelerated column-and-constraint generation method is developed. A dedicated neural network is trained to approximate the value function, while optimality is maintained by the design of the network architecture. Numerical studies indicate that the proposed method yields high-quality solutions and is up to $100$ times faster than Gurobi and $33$ times faster than classical column-and-constraint generation on the same 1028-node synthetic distribution network.

\end{abstract}

\begin{IEEEkeywords}
Offering strategy, DER aggregator, neural network, robust optimization
\end{IEEEkeywords}

\section{Introduction}
With the increasing penetration of distributed energy resources (DERs), DER aggregators face new challenges in participating in day-ahead electricity markets with significant generation and price uncertainty \cite{iea_unlocking_2022}. An offering strategy defines how an aggregator determines the quantity and price of energy or flexibility to be offered in the market while satisfying the operational and network constraints. Significant attention has been drawn to the design of economic day-ahead offering strategies through optimization-based methods, including the use of robust optimization or stochastic programming \cite{kaya_fifty_2026}.

Currently, the day-ahead DER offering problem is typically formulated as a two-stage adaptive robust optimization to capture price and generation uncertainty: the aggregator determines the ``here-and-now'' decisions (offering price-quantity pairs) before the realization of price and DER uncertainty and optimizes the ``wait-and-see'' recourse decisions (DER dispatch commands) after the uncertainty is revealed. However, robust optimization can be overly conservative in certain settings \cite{baringo_stochastic_2017}. By contrast, stochastic programming-based approaches model uncertainty through an explicit set of sampled scenarios. Its disadvantages lie in the requirement for knowledge of the explicit probabilistic distribution of the uncertain parameters, which might lead to the so-called ``curse of dimensionality'' \cite{plazas_multimarket_2005}. To balance both approaches effectively, the application of two-stage adaptive robust stochastic optimization (2S-ARSO) becomes attractive. With the advancement of optimization methods, existing approaches to solve 2S-ARSO include: 1) approximation algorithms \cite{bertsimas_theory_2010}; 2) Benders decomposition \cite{bertsimas_adaptive_2013}; 3) column-and-constraint generation (CCG) \cite{zeng_solving_2013}; and 4) scenario reduction method \cite{sun_robust_2021}. However, these methods may not scale well for large-scale problems. Given the computational burden stemming from the max-min recourse structure and the large scenario sets inherent to 2S-ARSO, its practical deployment is often constrained, particularly in industrial environments where operational time and computational resources are limited. 

In recent years, growing interest has focused on embedding neural network (NN) models within classical algorithmic pipelines to improve efficiency and scalability in practice \cite{zhang_survey_2023}. Much of the machine learning (ML) literature for linear programming (LP) targets deterministic settings, making it inapplicable to our 2S-ARSO problem. The core difficulty lies in its intrinsic infinite-dimensional characteristic: the recourse decisions are decision rules defined over the uncertainty space and parameterized by the here-and-now decisions, yielding an infinite-dimensional functional dependence. Second, 2S-ARSO embeds a special tri-level structure: choosing a here-and-now decision, identifying the worst-case uncertainty realization, and optimizing the ensuing recourse decision. Any practical method must address all three components in a unified manner. Third, the computational burden of solving the 2S-ARSO is negligible when considering network constraints and a large number of uncertainty scenarios. Finally, during the iteration of CCG, the induced optimization becomes progressively harder and may become computationally intractable as the number of iterations grows.

To address these challenges, we develop a tailored NN-accelerated CCG framework that relieves the computational burden of 2S-ARSO for DER day-ahead offering strategy design. The NN-based CCG algorithm is based on \cite{dumouchelle_neur2ro_2024}, while we adapt it to our specific problem setting. Specifically, an NN is trained to estimate the optimal cost of the value function of the recourse problem and to identify the corresponding worst-case uncertainty realization. The resulting predictions eliminate the need to solve the infinite-dimensional linear recourse problem and streamline the classical CCG procedure (see Section \ref{ML-CCG}). To the best of our knowledge, this is the first study leveraging ML to address the 2S-ARSO-based DER day-ahead offering problem. The training process is conducted once, and the trained NNs can be easily utilized for day-ahead offering decision making while achieving significant speed-ups that outperform the state-of-the-art commercial solvers. Our key contributions are summarized as follows:
\begin{enumerate}
\item We formulate the DER day-ahead offering problem as a hybrid 2S-ARSO model with network constraints and a polyhedral DER uncertainty set.
\item We develop an NN-accelerated CCG algorithm that tightens recourse constraints and yields an order-of-magnitude speedup for 2S-ARSO.
\item We design a tailored problem dimension-invariant neural architecture that jointly estimates worst-case realizations and the recourse value.
\item We demonstrate up to $100$-fold speedup of the NN-accelerated CCG over conventional CCG on synthetic networks using historical price data.
\end{enumerate}

The rest of the paper is organized as follows. Section~\ref{2S-ARSO} introduces the compact formulation of the day-ahead offering problem as a 2S-ARSO. Section \ref{2S-ARSO:Solution} proposes the NN-accelerated CCG solution algorithm. Numerical studies are included in Section \ref{NumericalStudy}. Section \ref{Conclusion} concludes the paper.

\section{Hybrid Adaptive Robust Stochastic Optimization Formulation}
\label{2S-ARSO}
In this work, we consider the day-ahead offering problem for a DER aggregator over 24 hours. The aggregator submits price-quantity offers (first-stage, ``here-and-now'') before DER uncertainty is realized.
After DER uncertainties are revealed, the aggregator makes operational recourse decisions (second-stage, ``wait-and-see'') regarding DER dispatch, which are constrained by device and power flow constraints.

We model the 2S-ARSO day-ahead offering problem as a 2S-ARO problem under sampled price trajectories. For the 2S-ARO model, the decision vector is split into ``here-and-now'' variables ($\boldsymbol{x}$) and ``wait-and-see'' (recourse) variables ($\boldsymbol{y}$). The here-and-now decisions (submitted price-quantity pairs) are fixed before the DER uncertainty is realized, whereas the cost-minimizing recourse decisions (DER dispatch and other real-time decisions) are determined after the uncertainty is revealed. The day-ahead offering model is formulated as the following risk-neutral expectation over all sampled day-ahead price scenarios:
\begin{multline}
\label{eq:obj_offering}
\min_{\boldsymbol{x}\in \mathcal{X}} \; \mathbb{E} _{\omega \in \Omega}[f(\boldsymbol{x},\omega )]
= \\
\min_{\boldsymbol{x}\in \mathcal{X}} \sum_{\substack{\omega \in \Omega \\ \sum \rho_\omega = 1}} {\rho _{\omega}}\left\{ \boldsymbol{c}^{\top}(\omega )\boldsymbol{x}+ \max_{\boldsymbol{\xi }\in \mathcal{U}} \min_{\boldsymbol{y}\in \mathcal{Y} (\boldsymbol{x},\boldsymbol{\xi} )} \boldsymbol{b}^{\top}\boldsymbol{y} \right\}.
\end{multline}


In the problem above, the day-ahead price at the first stage is modeled by stochastic price trajectories $\boldsymbol{\omega} \in \Omega$ sampled through the Markov process (each trajectory is assigned a nonnegative weight $\rho _{\omega}$) \cite{zheng_arbitraging_2022}. The DER uncertainty is modeled through the budgeted uncertainty set $\boldsymbol{\xi} \in \mathcal{U}$ \cite{thiele_robust_2010}. The first-stage cost $\boldsymbol{c}^\top(\boldsymbol{\omega}) \boldsymbol{x}$ describes the negative value of the day-ahead offering revenue given the price uncertainty $\boldsymbol{\omega}$. The recourse cost $\boldsymbol{b}^{\top} \boldsymbol{y}$ represents the cost of real-time DER operation given the realization of DER uncertainty $\boldsymbol{\xi}$. The feasible regions of the first- and second-stage problems are denoted by $\mathcal{X}\subseteq \mathbb{R}^n$ and $\mathcal{Y}\subseteq \mathbb{R}^m$, respectively. To be consistent with U.S. electricity market regulations \cite{kim_benefits_2021}, our framework discourages inter-settlement arbitrage between two-settlement markets. This is done by modeling the real-time price as a small deviation from the day-ahead one in such a way that penalizes any intentional real-time power deviation.

The feasible region of the first-stage problem $\mathcal{X}$ is a bounded polyhedron (polytope) given by
\begin{equation}
\label{first_stage_FR}
\mathcal{X} = \{ \boldsymbol{x} \in \mathbb{R}^n : 
\boldsymbol{A}\boldsymbol{x} \geq \boldsymbol{d}\},
\end{equation}
which contains both the monotonicity constraint of offering curves and the capacity constraints of offering quantities. The feasible region of the second-stage problem $\mathcal{Y}(\boldsymbol{x},\boldsymbol{\xi})$ is
\begin{equation}
\label{CompactRPFR}
    \mathcal{Y}(\boldsymbol{x},\boldsymbol{\xi}) =\left\{ \boldsymbol{y} \in \mathbb{R}^m: \boldsymbol{E}\boldsymbol{x}+\boldsymbol{Fy} \ge \boldsymbol{g}+\boldsymbol{H\xi } \right\}, 
\end{equation}
which includes DERs power availability constraints, DER physical constraints, and linear power flow constraints \cite{gan_convex_2014}.

\section{Neural Network-Accelerated Column-and-Constraint Generation Algorithm}
\label{2S-ARSO:Solution}

For each day-ahead price scenario, the 2S-ARSO day-ahead
offering model \eqref{eq:obj_offering} can be rewritten as: 
\begin{equation}
\label{CompactOB}
    \underset{\boldsymbol{x}\in \mathcal{X}}{\min} \; \boldsymbol{c}\left( \boldsymbol{\omega} \right) ^{\top}\boldsymbol{x}+\mathcal{Q} \left( \boldsymbol{x},\boldsymbol{\xi} \right),
\end{equation}
where $\mathcal{Q}(\cdot)$ denotes the value function of the 2S-ARSO and  is given by:
\begin{equation}
\label{CompactRP}
\mathcal{Q} \left( \boldsymbol{x}\right) =\underset{\boldsymbol{\xi }\in \mathcal{U}}{\max}\,\,\underset{\boldsymbol{y}\in \mathcal{Y} \left( \boldsymbol{x},\boldsymbol{\xi} \right)}{\min}\boldsymbol{b}^{\top}\boldsymbol{y}.
\end{equation}

The problem in \eqref{CompactRP} identifies the worst-case scenario $\boldsymbol{\xi}$ over the inner minimization problem. Given the polyhedral structure of $\mathcal{Y}$ and the right-hand side (RHS) uncertainty of $\boldsymbol{\xi}$, under the relatively complete recourse assumption, the inner minimization problem is a convex piecewise linear function in $\boldsymbol{\xi}$, whose maximization over a polytopic uncertainty set $\mathcal{U}$ is attained at one of its extreme points \cite[Sect. 3.2.2]{sun_robust_2021}.


\subsection{Column-and-Constraint Generation Algorithm}
\label{CCGalg}

The CCG algorithm solves problem \eqref{CompactOB} by exploiting the structural insights mentioned above in an iterative master problem/subproblem fashion. A detailed description of the algorithm can be found in \cite[Sect. 3.2.6]{sun_robust_2021}. We provide a very brief introduction here to facilitate further development. 

The CCG algorithm can be viewed as a constraint generation algorithm: for a given first-stage candidate solution $\boldsymbol{x}^{(k)}$, it identifies a worst-case scenario $\boldsymbol{\xi}^{(k)} \in \ext(\mathcal{U})$ for the second-stage problem and generates a set of new constraints for the master problem, where $\ext(\cdot)$ denote the set of extreme points of a polytope. At iteration $k$, the master problem of \eqref{CompactOB} is
\begin{subequations}
\label{CompactCCG}
\begin{align}
\min
\quad & \eta \\
\big( \text{over} \quad & {\boldsymbol{x}\in\mathcal{X}, \eta, \boldsymbol{y}^{(1)}, \ldots, \boldsymbol{y}^{(k)}} \big) \nonumber \\
\text{s.t.} \quad
& \eta \;\ge\; \boldsymbol{c}(\boldsymbol{\omega})^{\top}\boldsymbol{x} \;+\; \boldsymbol{b}^{\top}\boldsymbol{y}^{(i)},
&& i = 1, \ldots, k \\
& \boldsymbol{E} \boldsymbol{x}+\boldsymbol{F} \boldsymbol{y}^{(i)} \ge \boldsymbol{H}\boldsymbol{\xi}^{(i)} + \boldsymbol{g}
&& i = 1, \ldots, k,
\end{align}
\end{subequations}
where $\boldsymbol{\xi}^{(i)} \in \ext(\mathcal{U})$. It is straightforward to see that problem \eqref{CompactCCG} is a relaxation of \eqref{CompactOB} and thus provides a lower bound. 

To certify the validity of the current first-stage solution $\boldsymbol{x}^{(k)}$ and to identify the worst-case violation, we formulate a subproblem that computes the worst-case uncertainty realization for $\boldsymbol{x}^{(k)}$. By invoking LP strong duality, the inner minimization in \eqref{CompactRP} admits the following equivalent reformulation:
\begin{equation}
\label{DualinnerRP}
    \mathcal{Q}(\boldsymbol{x})
    = \max_{\substack{j=1,\ldots, r \\ \ell = 1,\ldots, m}} (\boldsymbol{\pi}^{(j)})^\top (\boldsymbol{g} - \boldsymbol{E}\boldsymbol{x}) + (\boldsymbol{\pi}^{(j)})^\top\boldsymbol{H}\boldsymbol{\xi}^{(\ell)}
\end{equation}
where $\boldsymbol{\pi}^{(j)}$ are the extreme points of $\mathcal{P} := \{ \boldsymbol{\pi} \ge 0:  \boldsymbol{F}^\top \boldsymbol{\pi} = \boldsymbol{b} \}$. Since both $\mathcal{P}$ and $\mathcal{U}$ are convex polytopes, the worst-case value $\mathcal{Q}(\boldsymbol{x})$ is attained at one of their extreme points, which allows \eqref{DualinnerRP} to be represented as a finite maximization over $\ext(\mathcal{P})$ and $\ext(\mathcal{U})$. Given the current first-stage solution $\boldsymbol{x}^{(k)}$, the corresponding upper bound $\mathrm{UB}$ to \eqref{CompactOB} is updated as
\begin{equation}
    \mathrm{UB} = \min \left\{ \mathrm{UB}, \boldsymbol{c}(\omega)^\top \boldsymbol{x}^{(k)} + \mathcal{Q}(\boldsymbol{x}^{(k)}) \right\},
\end{equation}

When the mismatch between the upper and lower bounds is less than the preset convergence criterion, the CCG algorithm terminates and the optimal value and solution are obtained.

\subsection{NN-Accelerated Column-and-Constraint Generation}
\label{ML-CCG}

Although $\mathcal{Q} \left( \boldsymbol{x}\right)$ is a convex piecewise linear function of $\boldsymbol{x}$, evaluating its value for any fixed $\boldsymbol{x}$ is a hard problem due to the presence of bilinear terms in \eqref{DualinnerRP} which destroys convexity in $\boldsymbol{\pi}$ and $\boldsymbol{\xi}$. The computational burden of \eqref{DualinnerRP} becomes even worse with the consideration of power flow constraints, which is normally ignored \cite{baringo_stochastic_2017} but considered herein. We develop a neural estimator that, given a first-stage decision, predicts the optimal objective value of the corresponding second-stage problem with specific uncertainty realization with high accuracy. The NN is then integrated within CCG to accelerate the solution of both the master problem and the subproblem. It does so using one mixed-integer linear program (MILP)-representable NN with learnable parameters $\Theta$ that can assess, for a first-stage decision $\boldsymbol{x}$ under uncertain $\boldsymbol{\xi}$, the following:
\begin{equation}
\label{NNforRP}
\mathrm{NN}_{\Theta}\left( \boldsymbol{x},\boldsymbol{\xi} \right) \approx 
\min_{\boldsymbol{y} \in \mathcal{Y} \left( \boldsymbol{x},\boldsymbol{\xi } \right)} \boldsymbol{c}\left( \boldsymbol{\omega} \right) ^{\top}\boldsymbol{x}+\boldsymbol{b}^{\top}\boldsymbol{y} 
\end{equation}
where $\Theta$ denotes the set of trainable parameters of the network. The detailed architecture of $\mathrm{NN}_{\Theta}(\cdot)$ is illustrated in Fig.~\ref{fig1}. 

\subsubsection{Master Problem}
For the master problem \eqref{CompactCCG}, given a finite subset of uncertainty realizations $\{\boldsymbol{\xi}^{(1)}, \ldots, \boldsymbol{\xi}^{(k)}\}$, we reformulate the problem using an $\argmax$ estimator over the DER uncertainty scenarios. The estimator identifies the realization that maximizes the surrogate-predicted recourse value, which in turn serves as a proxy for the original second-stage objective value. Given a trained NN with parameters $\Theta^*$ that approximate the recourse cost, the surrogate master problem can be reformulated as:
\begin{subequations}
\label{NNforMP}
\begin{align}
    \min_{\boldsymbol{x}, \boldsymbol{y}} \quad & \boldsymbol{c}(\boldsymbol{\omega})^{\top} \boldsymbol{x} + \boldsymbol{b}^{\top}\boldsymbol{y} \\
    \text{s.t.} \quad & \boldsymbol{A}\boldsymbol{x} \geq \boldsymbol{d} \\
    & \boldsymbol{Fy} \ge \boldsymbol{H}\boldsymbol{\xi} + \boldsymbol{g} -\boldsymbol{E}\boldsymbol{x}  \\
    & \boldsymbol{\xi} \in \argmax_{\boldsymbol{\xi}^{(1)}, \ldots, \boldsymbol{\xi}^{(k)}} \left\{ \mathrm{NN}_{\Theta^*}\left( \boldsymbol{x},\boldsymbol{\xi} \right) \right\} 
\end{align}
\end{subequations}
Here, the outer approximation of the value function through Benders multicut in \eqref{CompactCCG} is replaced by a fixed-size surrogate NN model, which enables much more efficient computation. The $\argmax$ operator in \eqref{NNforMP} can be equivalently reformulated as an MILP using standard big-$M$ construction, where auxiliary binary variables and linear constraints are used to describe the maximization selection logic.


\subsubsection{Subproblem}
We replace the evaluation of the value function $\mathcal{Q}(\cdot)$ in \eqref{DualinnerRP} by its approximation $\mathrm{NN}_{\Theta^*}(\boldsymbol{x}^*,\boldsymbol{\xi})$ given the uncertainty realization. Given the solution of the master problem $\boldsymbol{x}^*$, the subproblem seeks to certify its optimality by evaluating 
the worst-case uncertainty realization that maximizes the  second-stage cost. We assume complete recourse through the payment of penalties for power mismatches so that no feasibility check is needed. We therefore solve
\begin{equation}
\label{eq:AP-NN-obj}
\boldsymbol{\xi}^*
\in
\operatorname*{arg\,max}_{\boldsymbol{\xi}\in\mathcal{U}}
\ \mathrm{NN}_{\Theta^*}(\boldsymbol{x}^*,\boldsymbol{\xi}),
\end{equation}
Since $\mathcal{U}$ is polytopic, \eqref{eq:AP-NN-obj} is an MILP that is solvable by off-the-shelf solvers.

To determine whether a new $\boldsymbol{\xi}^* \in \mathcal{U}$ has been identified, we compare
its prediction to those already in the master problem:
\begin{equation}
\label{eq:AP-stopping}
\mathrm{NN}_{\Theta^*}(\boldsymbol{x}^*,\boldsymbol{\xi}^*)
\;\ge\;
\max_{i = 1, \ldots, k}
\mathrm{NN}_{\Theta^*}(\boldsymbol{x}^*,\boldsymbol{\xi}^{(i)})
\;+\;\varepsilon,
\end{equation}
where $\varepsilon > 0$ is a small number. If \eqref{eq:AP-stopping} holds, we add $\boldsymbol{\xi}^*$ to the active scenario set in \eqref{NNforMP}, increment $k$ by $1$, and re-solve the master problem. The algorithm terminates otherwise.


As the iteration count rises, evaluations of the neural modules impose increasing computational overhead. We therefore adopt an architecture that can implement NN-accelerated CCG both efficiently and effectively. After sampling historical data, we embed $\boldsymbol{x}$ and $\boldsymbol{\xi}$ using dedicated low-dimensional embedding NNs $(\Phi_{\boldsymbol{x}} \text{ and } \Phi_{\boldsymbol{\xi}})$ for each instance. The resulting embeddings are concatenated and fed to the value network $\Phi_{\Theta^\ast}$, which estimates the worst-case scenario and the optimal recourse objective and serves as an oracle module within the pipeline for subsequent optimization. The neural architecture for solving the recourse problem is shown in Figure \ref{fig1}. For the master problem, since the scenario parameters are fixed, the scenario embeddings $\Phi_{\xi}(\xi^{(k)})$ can be precomputed via a forward pass for each scenario. In the finalized master problem, only the decision-embedding network $\Phi_{\boldsymbol{x}}$ and the value NN $\Phi_{\Theta^*}$ are required. Both can be implemented with standard ReLU-activated networks that lead to MILP formulations. In addition, duplicating these modules across scenarios remains computationally tractable. For the second-stage recourse problem, it suffices to encode only the decision embedding $\Phi_{\boldsymbol{x}}$ together with the value network $\Phi_{\Theta^\ast}$, since $\Phi_{\boldsymbol{x}}(\boldsymbol{x})$ can be precomputed efficiently via a single forward pass.

\begin{figure}[!t]
\centering
\includegraphics[width=3.4 in]{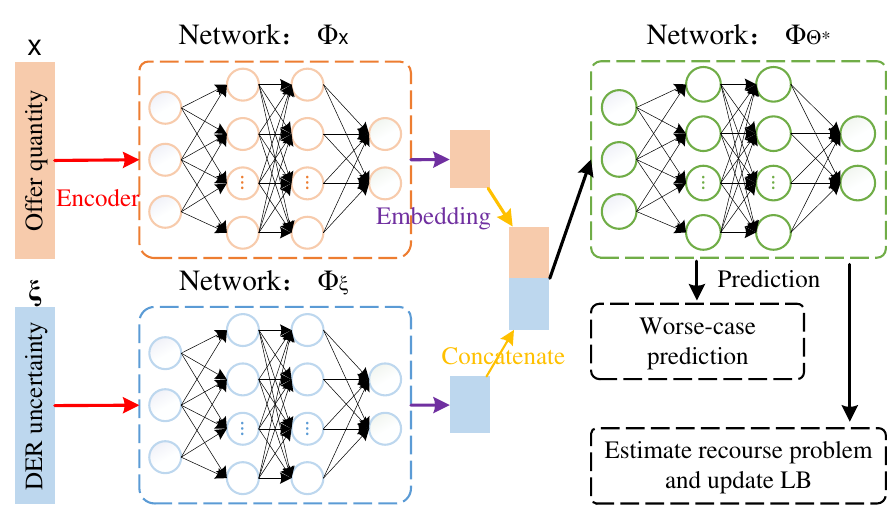} 
\caption{Neural architecture of NN-accelerated CCG}
\label{fig1}
\end{figure}

To generalize across instances, the architecture should be invariant to variable cardinality and ordering; and robust to permutations of constraint/objective coefficients. We enforce these properties by set-based NNs with shared parameters, which yield permutation-invariant encoders \cite{zaheer_deep_2018} and improve generalization to unseen instances. The detailed generalization process includes the following: Before feeding the first-stage solution and DER uncertainty scenarios to the embedding networks, these variables are decomposed into per-variable representations containing their values, constraint coefficients, and objective weights. To avoid underestimating the worst-case recourse cost, we check each NN-identified candidate scenario by solving the exact second-stage recourse problem. If it violates the tolerance, we add it to the active set and continue.
The resulting set of per-variable embeddings is then aggregated (e.g., sum pooling) and passed through a feedforward network to produce the first-stage and scenario representations. Due to space limitations, we omit further architectural details and refer to this component as the ``Encoder" in Fig.~\ref{fig1}, where the corresponding arrow is highlighted in red. The detailed description of the NN-accelerated CCG algorithm is summarized in Algorithm \ref{alg:neuraro-ccg-obj}. The finite-termination property of the algorithm is established in the Appendix.

\begin{algorithm}[t]
\caption{NN-Accelerated CCG Algorithm}
\label{alg:neuraro-ccg-obj}
\begin{algorithmic}[1]
\STATE \textbf{Input:}  uncertainty set $\mathcal{U}$;
objective NN $\Theta^*$; small tolerance $\varepsilon>0$.
\STATE \textbf{Output:} the optimal $\boldsymbol{x}^*\in\mathcal{X}$.
\STATE \textbf{Init:} pick any $\boldsymbol{\xi}_0\in\ext(\mathcal{U})$; $\mathcal{S}_0\leftarrow\{\boldsymbol{\xi}_0\}$; $k\leftarrow 0$;
\WHILE{not \texttt{converged}}
  \STATE Solve the master problem \eqref{NNforMP} with the active scenario set $\mathcal{S}^{(k)}$ to get $\boldsymbol{x}^{(k+1)}$.
  \STATE Derive the optimal solution of the subproblem
         $\boldsymbol{\xi}^* \in \argmax_{\boldsymbol{\xi}\in\mathcal{U}}\ \mathrm{NN}_{\Theta^*}(\boldsymbol{x}^{(k+1)},\boldsymbol{\xi})$.
  \IF{$\mathrm{NN}_{\Theta^*}(\boldsymbol{x}^{(k+1)},\boldsymbol{\xi}^*)\ge
       \max_{\boldsymbol{\xi}\in\mathcal{S}^{(k)}}\mathrm{NN}_{\Theta^*}(\boldsymbol{x}^{(k+1)},\boldsymbol{\xi})+\varepsilon$} \label{alg:ineq}
     \STATE $\mathcal{S}^{(k+1)}\leftarrow \mathcal{S}^{(k)}\cup\{\boldsymbol{\xi}^*\}$;
  \ELSE
     \STATE $\mathcal{S}^{(k+1)}\leftarrow \mathcal{S}^{(k)}$\quad \texttt{Converged}
  \ENDIF
  \STATE $k\leftarrow k+1$
\ENDWHILE
\STATE \textbf{return} $\boldsymbol{x}^{(k)}$
\end{algorithmic}
\end{algorithm}

\section{Numerical Study}
\label{NumericalStudy}
\subsection{Experiment Setup}
To validate the effectiveness of the NN-accelerated CCG algorithm, we test it on a 1028-node synthetic distribution network and is employed for comparison against state-of-the-art benchmark approaches. The synthetic distribution system is equipped with various resources in different busses, including rooftop solar PV (10-kW), behind-the-meter battery energy storage (Tesla’s Powerwall model, 11.3 kW/14.5 kWh), and an average load of 1.3 MW (peaking at 1.7 MW). All experiments are executed in Google Colab on a 64-bit Windows 11 workstation with an Intel® Core™ i7-6700HQ CPU running at 2.64 GHz and 24 GB of RAM, using GUROBI 12.0, and gurobi-machinelearning-1.5.5.dev0 is used to embed NNs into MILPs. The basic unit in this work is a fully-connected multilayer perceptron (MLP) with Rectified Linear Unit (ReLU) activation functions. We assess the performance of NN-accelerated CCG using two metrics: (i) optimality gap, computed as the mean absolute error (MSE) with the traditional CCG solution as the reference, and (ii) computational speedup relative to a classical CCG baseline. To test the performance of NN-accelerated CCG algorithm, we benchmark it with two state-of-the-art methods: 1) Gurobi \cite{gurobi}, which can directly solve the 2S-ARSO problem through a large MILP; 2) traditional CCG algorithm \cite{zeng_solving_2013}. 

\subsection{Structure and Accuracy of Neural Network}
For data training, we sample problem instances, candidate first-stage decisions, and DER uncertainty scenarios to construct feature vectors encoding constraint coefficients and objective weights. Labels are obtained by solving the second-stage recourse, which is a tractable deterministic MILP as $(\boldsymbol{x},\boldsymbol{\xi})$ are fixed. These data collection procedures are independent and parallelizable. The data collection times, training times, and total times are 562s, 844s, and 1,314s for the 2S-ARSO under 25 price trajectories, respectively. To keep the master and recourse problems tractable, we train small networks that achieve low MSE; therefore, we do not perform elaborate hyperparameter tuning, as further optimization would primarily refine the already strong numerical results. For the overall 2S-ARSO problem, we draw $1{,}000$ instances, generate $5$ first-stage decisions per instance, and sample $20$ scenarios per decision, producing $100{,}000$ labeled examples split into $80{,}000$ for training and $20{,}000$ for validation. Both embedding NNs employ two-layer MLPs with layer widths $[64,8]$. The value NN has one hidden layer of width $[8]$. We undergo 500 training epochs with validation MSE computed every 10 epochs. Model selection criteria prioritize minimal mean absolute validation error. Empirical findings reveal a close alignment between the MSE of the training and testing dataset, with the NN achieving sub-1.3\% MSE, indicating strong accuracy performance.

\subsection{Effectiveness of NN-Accelerated CCG}

A comparative evaluation of Gurobi, classical CCG, and the NN-accelerated CCG on the synthetic 1028-bus system, across varying numbers of sampled day-ahead price trajectories, is summarized in Table~\ref{tab:1028bus-perf}. As shown in Table \ref{tab:1028bus-perf}, compared with the Gurobi-based method, the proposed algorithm yields a reasonable optimal objective value with a gap of $0.1\%$ across all instances. This indicates that the trained ReLU-activated NN can accurately estimate the worst-case and optimal value of the second-stage recourse problem. Regarding computational performance, under four different numbers of sampled day-ahead price trajectories from the Markov process, the proposed method achieves at least $21.87\,\times$ speedup compared to the Gurobi-based method and $7.05\,\times$ speedup over the traditional CCG benchmark. This improved computational efficiency lies in the pre-trained value NN surrogates for the second-stage recourse problem. Under a large number of day-ahead price trajectories, such as in the 1028 synthetic network with 500 day-ahead price trajectories, the NN-accelerated CCG solves the 2S-ARSO within $1{,}232$ seconds, whereas the computational times for Gurobi and the traditional CCG are much longer. As the number of sampled day-ahead price trajectories increases and instances become more challenging, the NN-accelerated CCG delivers lower runtime and improved solution quality. Unlike classical CCG, which alternates between a master problem and a recourse problem, our approach embeds a surrogate of the recourse in a single MILP, thus removing decomposition overhead. Across all test cases, the trade-off between speed and accuracy is favorable: near-optimal solutions are obtained with less computation time, which is valuable in time-critical settings.

\begin{table}[!t]
\centering
\caption{Comparative Results on the 1028-Bus Test System}
\label{tab:1028bus-perf}
\renewcommand{\arraystretch}{1.1}
\setlength{\tabcolsep}{4pt}
\begin{tabular}{llrrrr}
\hline
\multirow{2}{*}{Method} & \multirow{2}{*}{Metric} & \multicolumn{4}{c}{Number of Sampled Price Trajectories} \\
 & & 25 & 100 & 250 & 500 \\
\hline
\multirow{2}{*}{Gurobi} 
 & Obj. (\$) & 1{,}891 & 1{,}670 & 1{,}618 & 1{,}609 \\
 & Time (s)       & 6{,}422 & 23{,}500 & 77{,}294 & 125{,}382 \\
\hline
\multirow{4}{*}{CCG} 
 & Gap (\%)       & 0 & 0 & 0 & 0 \\
 & Time (s)       & 2{,}065 & 7{,}886 & 24{,}230 & 40{,}841 \\
 & \makecell[l]{Speedup vs\\Gurobi}        & 3.11$\times$ & 2.98$\times$ & 3.19$\times$ & 3.07$\times$ \\
\hline
\multirow{6}{*}{Proposed} 
 & Gap (\%)       & 0.001 & 0.001 & 0.001 & 0.001 \\
 & Time (s)       & 293 & 607  &  986.02 &  1232 \\
 & \makecell[l]{Speedup vs\\Gurobi}        & 21.87$\times$ &  38.72$\times$ &  78.39$\times$ &  101.74$\times$ \\
 & \makecell[l]{Speedup vs\\CCG}    & 7.05$\times$  & 12.99$\times$ & 24.57$\times$ & 33.15$\times$ \\
\hline
\end{tabular}
\end{table}

\section{Conclusion}
\label{Conclusion}
This paper introduces an NN–accelerated CCG algorithm for the day-ahead optimal offering problem for DER aggregators. A tailored ReLU-activated NN, encoded through MILP constraints, serves as a surrogate for the recourse problem. The NN enables the master problem to incorporate Benders cuts and scenario selections while preserving feasibility. On large synthetic distribution networks, the approach can achieve 1 to 2 orders of magnitude speedup over traditional CCG benchmarks while maintaining solution optimality. These results demonstrate the effectiveness of the NN-accelerated CCG method in substantially improving the tractability of the network-constrained optimal DER offering problems. Potential future directions include the consideration of topology changes, which can be addressed by augmenting the training set with representative contingencies and encoding contingency-dependent network coefficients as inputs.

\appendix


The following result establishes a finite-termination condition for the NN-accelerated CCG algorithm used in this paper.

\begin{proposition}
\label{prop:finite-termination}
Suppose $\mathcal{X}$ is a polytope with $N$ extreme points, and assume
that the master problem selects an optimal extreme point of
$\mathcal{X}$ at each iteration. Then
Algorithm~\ref{alg:neuraro-ccg-obj} terminates in at most $N+1$
iterations.
\end{proposition}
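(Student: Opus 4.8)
The plan is to establish a no-repetition principle: as long as Algorithm~\ref{alg:neuraro-ccg-obj} has not converged, every call to the master returns a first-stage vector $\boldsymbol{x}^{(k+1)}$ that has not been produced at any earlier iteration. Because each master solution is, by hypothesis, an extreme point of $\mathcal{X}$ and there are only $N$ such points, this caps the number of non-terminating iterations at $N$; adding the single terminating iteration then yields the bound of $N+1$.

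First I would record two structural facts. (i) The active scenario sets are nested, $\mathcal{S}^{(0)}\subseteq\mathcal{S}^{(1)}\subseteq\cdots$, since each pass either appends one scenario or leaves the set unchanged, and every collected scenario lies in $\mathcal{U}$. (ii) Because the subproblem \eqref{eq:AP-NN-obj} is solved \emph{exactly} as an MILP over the polytope $\mathcal{U}$, the scenario it returns is a genuine global maximizer of the surrogate $\mathrm{NN}_{\Theta^*}(\boldsymbol{x},\cdot)$ over $\mathcal{U}$ for the \emph{particular} first-stage vector $\boldsymbol{x}$ passed to it; at a non-terminating iteration this maximizer is appended to the active set (the affirmative branch of the test at line~\ref{alg:ineq}).

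The heart of the argument is a \emph{revisit-implies-termination} lemma. Suppose the master returns at iteration $k$ a vector $\boldsymbol{x}^{(k+1)}$ that coincides with an earlier solution $\boldsymbol{x}^{(j+1)}$, $j<k$. Iteration $j$ cannot itself have terminated (otherwise iteration $k$ would never be reached), so by fact~(ii) it appended a global maximizer $\boldsymbol{\xi}^{(j)}_\star$ of the surrogate at $\boldsymbol{x}^{(j+1)}$, and by nesting $\boldsymbol{\xi}^{(j)}_\star\in\mathcal{S}^{(k)}$. Since $\mathrm{NN}_{\Theta^*}$ is a fixed deterministic function and $\boldsymbol{x}^{(k+1)}=\boldsymbol{x}^{(j+1)}$, the best surrogate value already available in the current set $\mathcal{S}^{(k)}$ equals the global maximum over $\mathcal{U}$, which in turn equals the freshly computed value $\mathrm{NN}_{\Theta^*}(\boldsymbol{x}^{(k+1)},\boldsymbol{\xi}^*)$. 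The strict-improvement test \eqref{eq:AP-stopping} therefore fails for every $\varepsilon>0$, so the \texttt{Converged} branch is taken. I would stress that only the \emph{value} of the maximum, not the identity of the maximizer, enters the test, so ties in the $\argmax$ are immaterial.

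Taking the contrapositive of this lemma gives the no-repetition principle, after which the counting argument of the first paragraph closes the proof: at most $N$ non-terminating iterations, each visiting a fresh element of $\ext(\mathcal{X})$, followed by at most one terminating iteration, for a total of at most $N+1$. I expect the main obstacle to be fact~(ii): carefully arguing that exact solvability of \eqref{eq:AP-NN-obj} certifies the collected scenario as a true global maximizer over $\mathcal{U}$ for its generating first-stage vector, and then checking that, because \eqref{eq:AP-stopping} compares maxima of surrogate \emph{values}, this single collected scenario is enough to block any further strict improvement once its generating vector reappears.
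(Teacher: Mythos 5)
Your proof is correct and follows essentially the same route as the paper's: a revisit-implies-termination argument (the scenario appended at the earlier visit already attains the global surrogate maximum for that first-stage point, so the strict $\varepsilon$-improvement test in line~\ref{alg:ineq} must fail on any revisit) combined with pigeonhole counting over the $N$ extreme points of $\mathcal{X}$. Your write-up is in fact somewhat more careful than the paper's, which leaves the nesting of the scenario sets and the final counting step implicit and states the key inequality with an $\argmax$ where a $\max$ is meant.
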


\begin{proof}
Let $\boldsymbol{x}^{(k)}$ be the optimal solution of the master problem \eqref{NNforMP} at iteration $k$ and let $\mathcal{S}^{(k)} \subseteq \mathcal{U}$ be the active scenario set at iteration $k$. Define $r^{(k)} := \max_{\boldsymbol{\xi} \in \mathcal{S}^{(k)}} \mathrm{NN}_{\Theta^*}\left( \boldsymbol{x}^{(k)},\boldsymbol{\xi} \right)$. Suppose at iteration $k' > k$, the same extreme point $\boldsymbol{x}^{(k)}$ is identified in the master problem. Since $\bar{\boldsymbol{\xi}} \in \argmax_{\boldsymbol{\xi}\in\mathcal{U}}\ \mathrm{NN}_{\Theta^*}(\boldsymbol{x}^{(k)},\boldsymbol{\xi})$ is identified and added to the active scenario set during iteration $k$, we know $\bar{\boldsymbol{\xi}} \in \mathcal{S}^{(k')}$, it then follows $r^{(k')} \ge \argmax_{\boldsymbol{\xi}\in\mathcal{U}}\ \mathrm{NN}_{\Theta^*}(\boldsymbol{x}^{(k')},\boldsymbol{\xi})$. Consequently, the inequality in line \ref{alg:ineq} of Algorithm \ref{alg:neuraro-ccg-obj} does not hold and the algorithm terminates. 
\end{proof}



\bibliographystyle{IEEEtran}
\bibliography{IEEEabrv,references}  

\end{document}